\theoremstyle{plain}
\newtheorem{theorem}{Theorem}
\newtheorem{claim}{Claim}
\theoremstyle{definition}
\newtheorem{definition}{Definition}
\def\rem#1{{\marginpar{\raggedright\scriptsize #1}}}
\title{Optimizing relinearization in circuits for homomorphic encryption}
\author[]{Hao Chen}
\affil[]{Microsoft Research \\
\texttt{haoche@microsoft.com}}
\date{}
\begin{document}
\maketitle  

\begin{abstract}
Fully homomorphic encryption (FHE) allows an untrusted party to evaluate arithmetic circuits,  i.e., perform additions and multiplications on encrypted data, without having the decryption key.

One of the most efficient class of FHE schemes include BGV/FV schemes, which are based on the hardness of the RLWE problem. They share some common features: ciphertext sizes grow after each homomorphic multiplication; multiplication is much more costly than addition, and the cost of homomorphic multiplication scales linearly with the input ciphertext sizes. Furthermore, there is a special {\it relinearization} operation that reduce the size of a ciphertext, and the cost of relinearization is on the same order of magnitude as homomorpic multiplication. This motivates us to define a discrete optimization problem, which is to decide where (and how much) in a given circuit to relinearize,  in order to minimize the total computational cost. 

In this paper, we formally define the {\it relinearize problem}. We prove that the problem is NP-hard. In addition, in the special case where each vertex has at most one outgoing edge, we give a polynomial-time algorithm. 
\end{abstract}

\section{Introduction}

Fully homomorphic encryption (FHE) is an encryption technique which allows any untrusted party to evaluate functions on encrypted data without the decryption key.  As a typical application, FHE allows a client to outsource computation to an untrusted cloud. It has generated interest in fields such as health and finance, due to the need to analyze sensitive data without having access to the data itself. Since Gentry introduced the first FHE scheme in 2009, there has been a line of work that proposed new FHE schemes with improved efficiency, among which two of the most widely used schemes are \cite{brakerski2014leveled} and its scale-invariant counterpart \cite{fan2012somewhat}. Implementations of these schemes include \cite{halevi2014algorithms}, \cite{chensimple}, and \cite{aguilar2016nfllib}. There has been numerous work that design applications based on these schemes. Some of them (\cite{gilad2016cryptonets,bos2017privacy}) evaluate machine learning models on encrypted data. Others use FHE to design secure protocols such as private information retrieval \cite{melchor2016xpir} and private set intersection \cite{chen2017fast}.

Unfortunately, in these schemes homomorphic operations are still several-orders of magnitude slower than performing the same operation on plaintexts. Therefore, any optimization in the computation time has great interest. In order to use FHE to evaluate a function, one first needs to express the function as an arithmetic circuit. The circuit is represented as a direct acyclic graph with each vertex being either an input, an output, or an arithmetic operation such as multiplication and addition. In both schemes mentioned above, a fresh ciphertext is a pair of polynomials. When a homomorphic multiplication is performed, the length of the output ciphertext grows. More precisely, if we denote the 
length of a ciphertext $c$ by $l(c)$, then $l(  c_1 \otimes c_2 ) = l(c_1) +  l(c_2) - 1$. The length of the result of a homomorphic addition is the maximum length of the two operands, i.e., $l(  c_1 \oplus c_2 ) = \max\{l(c_1), l(c_2)\}$. 

Rouhgly speaking, the computational cost to perform a homomorphic multiplication scales linearly with its input lengths. In both schemes, 
we can model the amount of work it takes to perform a homomorphic multiplication between two ciphertexts $c_1$ and $c_2$ by 
\[
	k_m (l(c_1) + l(c_2)), 
\] 
where $k_m$ is some scheme-dependent constant. In FHE, there is also a squaring operation, which takes as input an encryption of $x$ and returns an encryption of $x^2$. It has the same cost \footnote{Actually, the cost of squaring $x$ is a constant factor smaller than multiplying $x$ with itself. For simplicity, we will assume that the costs are equal. This simplification does not invalidate the results.} and length effect as multiplication, but only takes one input. 

Homomorphic additions, on the other hand, takes much less time to perform compared to multiplication. Hence in this work we will assume that additions are ``free''.  For the same reason, we will adopt the common notation from the FHE literature, and denote by depth of a circuit by the largest number of multiplication vertices contained in a path. 

Note that it is undesirable to let the ciphertext sizes grow, since it will increase both the computational cost and the storage burden. To control the ciphertext sizes, both schemes support a special operation called \emph{Relinearization}. Effectively, relinearizing a ciphertext means reducing its length, while keeping the underlying message the same. We can use this operation to reduce the length of a ciphertext to any integer between two and its original length.  The cost of relinearization scales linearly with the reduction in ciphertext length. In other words, there exists a constant $k_r$ such that reducing the ciphertext lengths by $i$ takes $i \cdot k_r $ units of work. 

Suppose we are given an arithmetic circuit to perform on encrypted inputs. It is now an optimization problem to decide where and how much to relinearize, in order to minimize the total amount of work, consisting of multiplication cost and relinearization cost. Previous works employ the simple strategy of relinearizing after every multiplication/squaring. In this way, the multiplication costs are kept minimal. However, this strategy is not always optimal, as we will demonstrate in Section~\ref{sec: example}.

\subsection{Roadmap}
In Section 2, we will formally describe the problem and show why this simple strategy can be sub-optimal. In Section 3, we prove that the relinearize problem is NP-hard by reducing from the knapsack problem. Finally, in Section 4, we restrict to the special case where each vertex in the circuit has at most one outgoing edge, and give a polynomial time algorithm. 

\subsection{Related work}

The work \cite{carpov2015armadillo} is an effort to 
find a good circuit representation of a function, in order to minimize the total computation time. 

Bootstrapping is an operation that refreshes the so-called noise in FHE ciphertexts. It is an essential yet expensive operation. The two papers \cite{lepoint2013minimal} and \cite{benhamouda2017optimization} aim at minimizing the total number of bootstrapping operations in a circuit, while keeping the noise from overflowing in order to ensure the final result is correct. In their work, the authors implicitly assume the relinearization is done after every multiplication. Similarly, we will make a simplifying assumption that the boostrapping time is a constant, so that it does not factor into our optimizatoin problem. It will be interesting to combine these works in order to achieve an overall optimization that targets both operations.

{\bf Acknowledgement} The author thanks Rebecca Hoberg and Mohit Singh for helpful discussions in preparing this work. We modify the usual definition of the arithmetic circuits to include the squaring operations in FHE.

\section{Problem Description}

To formally describe our problem, we need to properly define circuits used in FHE applications. 

\begin{definition}
a (squaring-enabled) arithmetic circuit is a directed acyclic graph $G = (V,E)$, where there are three kinds of vertices: input vertices has indegree 0 and outdegree 1; output vertices has indegree $\in \{1, 2\}$ and outdegree 0; add/multiply operation vertices have indegree 2 and outdegree 1; finally, square operation vertices have indegree and outdegree both equal to 1. 
\end{definition}

We will define the relinearize problem as an integer programming problem on  arithmetic circuits. For every vertex $i$, we maintain an integer variable $l^{new}(i)$ (the final length of vertex $i$ during homomorphic evaluation of $G$), and an integer 
variable $x_i$, which indicates the amount of relinearization at $i$. We will denote the two parents of a vertex $i$ by $p_1(i)$ and $p_2(i)$. If $i$ is a squaring vertex, then we set $p_1(i) = p_2(i)$. We denote addition vertices by $\oplus$ and multiplication/square vertices by $\otimes$. To resolve ambiguity, we make the convention that if a $\otimes$ vertex has two distinct parents, then it is understood as a multiplication; otherwise it is a squaring.

Then the {\it relinearize problem} on $G$ is 
\[
	\mbox{minimize } k_r \sum_{i \in V } x_i  + \sum_{i = \otimes} k_m (l^{new}(i) + x_i), 
\]
s.t. 

\begin{align*}
l^{new}(i) \geq 2  & \qquad &\mbox{ for all } i \\ 
l^{new}(i)  = l^{new}(p_1(i)) + l^{new}(p_2(i)) - 1 - x_i  & \qquad & \mbox{if } i = \otimes \\ 
l^{new}(i)  \geq l^{new}(p_1(i))  - x_i  & \qquad & \mbox{if } i = \oplus \\ 
l^{new}(i)  \geq l^{new}(p_2(i))  - x_i  & \qquad & \mbox{if } i = \oplus \\
x_i, l^{new}(i) \in \bZ_{\geq 0}  & \qquad &\mbox{ for all } i \\
\end{align*}

\subsection{An example} \label{sec: example} 
To demonstrate the non-trivality of the relinearize problem,  we consider the following circuit: 
\begin{center}
\begin{tikzpicture}[commutative diagrams/every diagram] 
\node (Pfinal) at (2,2) {$\otimes$}; 
\node (P1) at (1,1) {$\otimes$} ; 
\node (P2) at (3,1) {$\oplus$};
\node (P3) at (0,0) {$\oplus$};
\node (P4) at (2,0) {$\otimes$ $u$};
\node (P5) at (4,0) {$\oplus$};
\node (P6) at (-0.5,-1) {\textbigcircle};
\node (P7) at (0.5,-1) {\textbigcircle};
\node (P8) at (1.5,-1) {\textbigcircle};
\node (P9) at (2.5,-1) {\textbigcircle};
\node (P10) at (3.5,-1) {\textbigcircle};
\node (P11) at (4.5,-1) {\textbigcircle};
\path[commutative diagrams/.cd, every arrow, every label]

(P1) edge node[swap] {} (Pfinal)  
(P2) edge node[swap] {} (Pfinal) 
(P3) edge (P1) 
(P4) edge (P1) 
(P4) edge node {} (P2) 
(P5) edge node {} (P2)
(P6) edge node {} (P3)
(P7) edge node[swap] {} (P3)
(P8) edge node {} (P4)
(P9) edge node[swap] {} (P4)
(P10) edge node  {}  (P5)
(P11) edge node[swap] {} (P5);
\end{tikzpicture}
\end{center}

First, we apply the simple strategy and relinearize at every multiplication vertex. Then the total cost is equal to $12 k_m + 3k_r$. Alternatively, we can choose to only relinearize the vertex $u$. Then the multiplication cost increases to $14k_m$, while the relinearization cost is $k_r$, so the total cost is $14k_m + k_r$. Comparing this with the previous cost, we see that as long as  $k_r > k_m$, the simple strategy is not optimal.

\section{NP-hardness of the Relinearize Problem}

We prove a polynomial reduction from the knapsack problem to the relinearize problem, which establishes that the latter problem is NP-hard.  First we recall the definition of  knapsack problem.

\begin{definition}
Given positive integers $v_1, \ldots, v_n$, $w_1, \ldots ,w_n$ and $W$. The (0-1) knapsack problem is:
\begin{align*}
\mbox{maximize} &\sum_{i=1}^{n} v_i x_i \\
\mbox{subject to } & x_i \in \{0,1\}
\mbox{ and }  \sum w_i x_i \leq W. 
\end{align*}
\end{definition}

For our convenience, we make some modifications to the setting of the relinearize problem. We change the inputs lengths from two to one, and we modify the equation $l(c1 \otimes c2) = l(c1) + l(c2)-1$ to $l(c1 \otimes c2) = l(c1) + l(c2)$.  One can check that under this modificaiton, the length of every vertex is smaller by one. Hence the modified problem is equivalent to the original problem.



To prepare for the main theorem, we make some convenient definitions.  
\begin{definition}
A circuit is of type $\cL(k)$ if it consists of one input vertex, one output vertex, and multiplication/squaring vertices, such that if the first non-input vertex length is reduced from 2 to 1, then the length of the output vertex reduces by $k$.  
\end{definition}

Figure~\ref{fig: L7} is an example of $\cL(7)$. 
\begin{figure}[h!]
\begin{center}
\begin{tikzpicture}[commutative diagrams/every diagram]  
\node (P1) at (2,5) {$\otimes$}; 
\node (P2) at (2,4) {$\otimes$}; 
\node (P3) at (2,3) {$\otimes$}; 
\node (P4) at (2,2) {$\otimes$} ; 
\node (P5) at (2,1) {$\otimes$};
\node (P6) at (2,0) {\textbigcircle};;
\path[commutative diagrams/.cd, every arrow, every label] 
(P2) edge  (P1)  
(P3) edge  (P2) 
(P4) edge  (P3) 
(P4) edge[bend right]   (P1)
(P5) edge[bend left]   (P2)
(P5) edge (P4) 
(P6) edge (P5);
\end{tikzpicture}
\end{center}
\caption{an example of $\cL(7)$}
\label{fig: L7}
\end{figure}

\begin{lemma} \label{lem: one}
For all integers $k \geq 1$, there exists a circuit of type $\cL(k)$ which has at most $2 \lceil \log k \rceil$ vertices. Moreover, the  cost to evaluate this circuit is bounded above by $4 k_m k \lceil \log(k) \rceil$.
\end{lemma}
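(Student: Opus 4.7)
The plan is to exploit the binary expansion of $k$, relying on the fact that under the modified length rules, a squaring doubles a propagated length-decrement while a multiplication sums the decrements of its two inputs. Write $k = \sum_{i \in S} 2^i$ with $S \subseteq \{0,1,\ldots,m\}$ and $m = \lfloor \log_2 k \rfloor$.

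The construction I would use is a ``doubling chain'' of squarings on top of a length-$1$ input vertex $v$: set $Q_0 = v \otimes v$ and $Q_{i+1} = Q_i \otimes Q_i$ for $0 \le i < m$. A one-line induction shows $l(Q_i) = 2^{i+1}$, and that decreasing $l(Q_0)$ from $2$ to $1$ decreases $l(Q_i)$ by exactly $2^i$. Next, enumerate $S = \{i_1 < \cdots < i_s\}$ and combine the chosen $Q_{i_j}$ via a running product $R_1 = Q_{i_1}$, $R_j = R_{j-1} \otimes Q_{i_j}$; take $R_s$ as the output vertex. Then $l(R_s) = \sum_{j=1}^{s} 2^{i_j+1} = 2k$, and when $l(Q_0)$ drops by $1$ the length of $R_s$ drops by $\sum_j 2^{i_j} = k$, which is exactly the property required for $\cL(k)$.

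For the vertex count, this uses $1$ input, $m+1$ squarings, and $s-1$ multiplications, totaling $m + s + 1$. When $k$ is not a power of $2$ we have $m+1 = \lceil \log k \rceil$ and $s \le m+1$, yielding at most $2\lceil \log k \rceil$. When $k = 2^m$ we have $s = 1$, giving $m + 2 \le 2\lceil \log k \rceil$ for $k \ge 4$; the corner cases $k = 1, 2$ can be handled by direct inspection. For the cost bound, the squarings contribute $\sum_{i=0}^{m} 2 k_m\, l(Q_{i-1}) \le 4 k_m \cdot 2^m \le 4 k_m k$ (writing $Q_{-1} = v$). Each of the $\le \lceil \log k \rceil$ chain multiplications has cost $k_m\, l(R_j)$ under the modified rule $l(c_1 \otimes c_2) = l(c_1) + l(c_2)$, and $l(R_j) \le l(R_s) = 2k$, so the multiplications contribute at most $2 k_m k \lceil \log k \rceil$. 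Adding the two pieces keeps the total below $4 k_m k \lceil \log k \rceil$ for all $k$ beyond a small threshold.

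The main obstacle I expect is purely bookkeeping at the boundary: making the vertex count exactly $2\lceil \log k \rceil$ rather than $2\lceil \log k \rceil + O(1)$ in every case may require merging the output vertex with the final multiplication, or treating the smallest $k$ by hand. The underlying propagation-of-reductions argument, and the geometric-series cost summation, are both elementary.
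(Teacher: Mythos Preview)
Your proposal is correct and follows essentially the same construction as the paper: build a chain of repeated squarings to reach the top bit of $k$, then multiply in the remaining bits of the binary expansion. The only notable difference is in the cost bound, where the paper argues more crudely---every vertex in the circuit has length at most $2k$, hence evaluation cost at most $2k_m k$, and multiplying by the $2\lceil \log k\rceil$ vertex count gives $4k_m k\lceil \log k\rceil$ directly---whereas you split the squaring and multiplication contributions; both reach the same bound, and both share the same harmless sloppiness for very small $k$.
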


\begin{proof}
If $k$ is a power of 2, we can realize $\cL(k)$ by a circuit that does $\log(k) + 1$ consecutive squarings.
The total cost of executing the circuit is $k_m \cdot (2 + 4 + \cdots + 2k) < 4k_m k$.  In general, we can start by  building the circuit $\cL(2^{[\log(k)]})$. Then for every nonzero bit in the binary representation of $k$, we need to add a multiplication vertex.  Since there are at most $\log(k)$ bits, 
we know the number of vertices is at most $2 \log(k)$. 

As for the evaluation cost,  note that each vertex in the circuit has length bounded above by $2k$, hence evaluating it has cost bounded by $2k k_m$. The claim follows because there are at most $2 \lceil \log (k) \rceil$ vertices.
\end{proof}

Next we describe some simple ways to construct new circuits from old ones. 
\begin{definition}
(1) The addition/multiplication of two circuits. Take two circuits $G_1$ and $G_2$ with unique output vertices $v_1$ and $v_2$.  Then $G_1 \boxplus G_2$ (resp. $G_1 \boxtimes G_2$) is the circuit that is the union of $G_1$ and $G_2$, plus an extra addition (resp. multiplication) vertex that has $v_1$ and $v_2$ as parents. 
See Figure~\ref{fig: add} for an example. 

(2) The concatenation of two circuits. Let $G_1, G_2$ be two circuits such that the number of output vertices of $G_1$ is equal to the number of inputs of $G_2$. Then we simply ``feed'' the outputs of $G_1$ to inputs of $G_2$. We denote the resulting circuit by 
$G_1 \curvearrowright G_2$. See Figure~\ref{fig: cat} for an example.

(3) The $K$-repeat of a circuit along a subset of vertices. Let $G$ be a circuit and let $S = \{ s_1, \ldots, s_k \}$ be vertices of $G$. 
Let $K$ be a positive integer. Then we keep the vertices $s_i$ and all their ancestors, and copy the rest of the circuit $K$ times. The resulting circuit is denoted by $G^{(K)}_{S}$. See Figure~\ref{fig: repeat} for an example. 

(4) The gluing of two circuits along a subset of vertices. Let $G_1$ and $G_2$ be two circuits and $S_1, S_2$ be subsets of their vertices, such that the subgraph of $G_1$ consisting of ancestors of $S_1$ (including vertices in $S_1$) is isomorphic to the corresponding subgraph in $G_2$. Then the gluing of $G_1$ and $G_2$ along $S_1, S_2$ is the circuit that contains the common subgraph  and the disjoint union of the rest of the two graphs. We denote the new circuit by $G_1 \star_{S_1} G_2$ when $S_2$ and the isomorphism is clear from context. See Figure~\ref{fig: glue} for an example. Note that (3) is a special case of (4). 

\end{definition}

\begin{figure}
\begin{tikzpicture}[commutative diagrams/every diagram] 
\node (P2) at (2,1) {$\otimes$};
\node (P3) at (1,0) {\textbigcircle};;
\node (P4) at (3,-0) {\textbigcircle};;
\path[commutative diagrams/.cd, every arrow, every label] 
(P3) edge (P2)
(P4) edge (P2); 
\node (P5) at (2,2) {$G_1$};;
\node (P5) at (7,3) {$G_2$};;
\node (P5) at (12,4) {$G_1 \boxplus G_2$};;

\node (P6) at (5,0) {\textbigcircle};;
\node (P7) at (7,0) {\textbigcircle};;
\node (P8) at (9,0) {\textbigcircle};;
\node (P9) at (6,1) {$\otimes$};
\node (P10) at (8,1) {$\oplus$};
\node (P11) at (7,2) {$\otimes$};
\path[commutative diagrams/.cd, every arrow, every label] 
(P6) edge (P9)
(P7) edge (P9)
(P7) edge (P10)
(P8) edge (P10)
(P9) edge (P11)
(P10) edge (P11);
\node (P12) at (11,0) {\textbigcircle
};;
\node (P13) at (10,1) {\textbigcircle};;
\node (P14) at (13,0) {\textbigcircle};;
\node (P15) at (12,1) {\textbigcircle};;
\node (P16) at (15,0) {\textbigcircle};;
\node (P17) at (13,1) {$\otimes$};
\node (P18) at (15,1) {$\oplus$};
\node (P19) at (13,2) {$\otimes$};
\node (P20) at (15,2) {$\otimes$};
\node (P21) at (14,3) {$\oplus$};
\path[commutative diagrams/.cd, every arrow, every label] 
(P12) edge (P17)
(P14) edge (P17)
(P14) edge (P18)
(P16) edge (P18)
(P13) edge (P19)
(P15) edge (P19)
(P17) edge (P20)
(P18) edge (P20)
(P19) edge (P21)
(P20) edge (P21); 
\end{tikzpicture}
\caption{Example of $G_1 \boxplus G_2$}
\label{fig: add}
\end{figure}
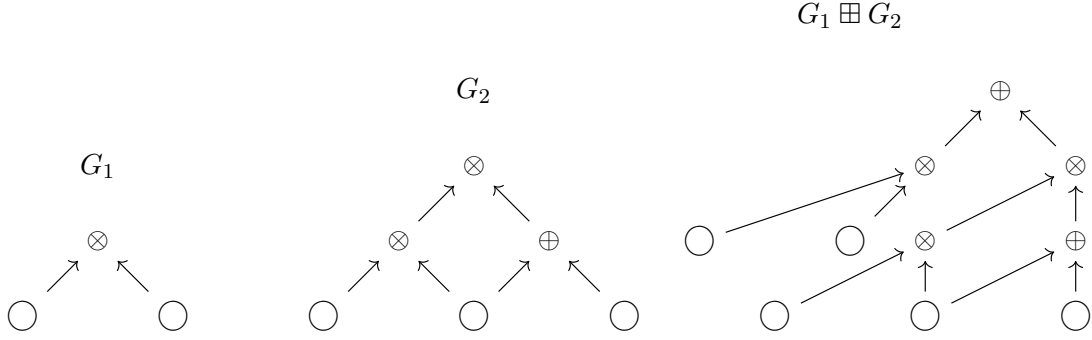

\begin{figure}
\begin{tikzpicture}[commutative diagrams/every diagram] 
\node (P1) at (1,0) {\textbigcircle};
\node (P2) at (2,0) {\textbigcircle};;
\node (P3) at (1,1) {$\oplus$};;
\node (P4) at (2,1) {$\otimes$};

\path[commutative diagrams/.cd, every arrow, every label] 
(P1) edge (P3)
(P2) edge (P3)
(P1) edge (P4)
(P2) edge (P4); 
\node (P5) at (1.5,2) {$G_1$};;
\node (P5) at (5,2) {$G_2$};;
\node (P5) at (10,3) {$G_1 \curvearrowright G_2$};;
\node (P6) at (5,1) {$\otimes$};
\node (P7) at (4,0) {\textbigcircle};;
\node (P8) at (6,-0) {\textbigcircle};;
\path[commutative diagrams/.cd, every arrow, every label] 
(P7) edge (P6)
(P8) edge (P6); 
\node (P10) at (9,0) {\textbigcircle};
\node (P11) at (11,0) {\textbigcircle};;
\node (P12) at (9,1) {$\oplus$};;
\node (P13) at (11,1) {$\otimes$};
\node (P14) at (10,2) {$\otimes$};
\path[commutative diagrams/.cd, every arrow, every label] 
(P10) edge (P12)
(P11) edge (P12)
(P10) edge (P13)
(P11) edge (P13)
(P12) edge (P14)
(P13) edge (P14); 
\end{tikzpicture}
\caption{Example of $G_1\curvearrowright  G_2$}
\label{fig: cat}
\end{figure}

\begin{figure}
\begin{tikzpicture}[commutative diagrams/every diagram] 

\node (P6) at (1,4) {$G$};;
\node (P6) at (10,4) {$G_S^{(2)}$};;

\node (P1) at (0,0) {\textbigcircle};
\node (P2) at (2,0) {\textbigcircle};;
\node (P3) at (0,1) {$\otimes$, $s_1$};;
\node (P4) at (2,1) {$\otimes$, $s_2$};
\node (P5) at (1,2) {$\oplus$};;
\node (P6) at (1,3) {$\otimes$};;
\path[commutative diagrams/.cd, every arrow, every label] 
(P1) edge (P3)
(P2) edge (P3)
(P1) edge (P4)
(P2) edge (P4)
(P3) edge (P5)
(P4) edge (P5)
(P5) edge (P6);
\node (P10) at (9,0) {\textbigcircle};
\node (P11) at (11,0) {\textbigcircle};;
\node (P12) at (9,1) {$\otimes$, $s_1$};;
\node (P13) at (11,1) {$\otimes$, $s_2$};
\node (P14) at (9,2) {$\oplus$};
\node (P15) at (11,2) {$\oplus$};
\node (P16) at (9,3) {$\otimes$};
\node (P17) at (11,3) {$\otimes$};
\path[commutative diagrams/.cd, every arrow, every label] 
(P10) edge (P12)
(P11) edge (P12)
(P10) edge (P13)
(P11) edge (P13)
(P12) edge (P14)
(P13) edge (P14)
(P12) edge (P15)
(P13) edge (P15)
(P14) edge (P16)
(P15) edge (P17);
\end{tikzpicture}
\caption{Example of $G_S^{(K)}$ for $K= 2$ and $S = \{s_1, s_2\}$}
\label{fig: repeat}
\end{figure}
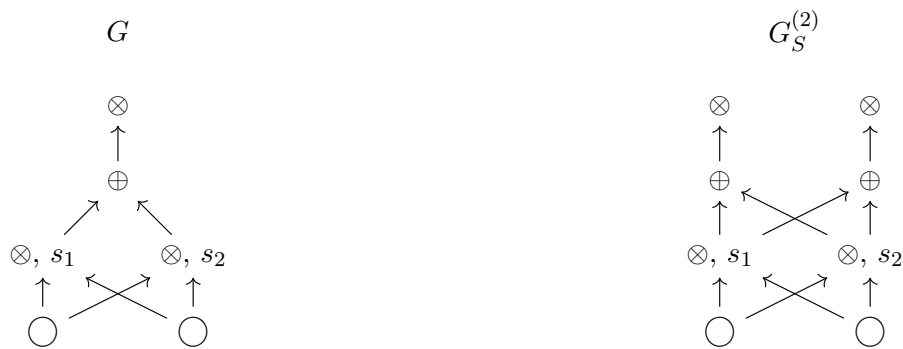


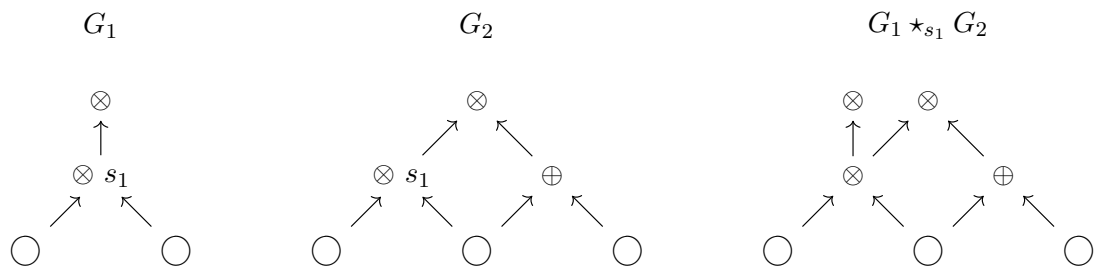
\begin{figure}
\begin{tikzpicture}[commutative diagrams/every diagram] 
\node (P1) at (2,2) {$\otimes$};
\node (P2) at (2,1) {$\otimes$ $s_1$};
\node (P3) at (1,0) {\textbigcircle};;
\node (P4) at (3,-0) {\textbigcircle};;
\path[commutative diagrams/.cd, every arrow, every label] 
(P3) edge (P2)
(P4) edge (P2)
(P2) edge (P1);
\node (P5) at (2,3) {$G_1$};;
\node (P5) at (7,3) {$G_2$};;
\node (P5) at (13,3) {$G_1 \star_{s_1} G_2$};;
\node (P6) at (5,0) {\textbigcircle};;
\node (P7) at (7,0) {\textbigcircle};;
\node (P8) at (9,0) {\textbigcircle};;
\node (P9) at (6,1) {$\otimes$ $s_1$};
\node (P10) at (8,1) {$\oplus$};
\node (P11) at (7,2) {$\otimes$};
\path[commutative diagrams/.cd, every arrow, every label] 
(P6) edge (P9)
(P7) edge (P9)
(P7) edge (P10)
(P8) edge (P10)
(P9) edge (P11)
(P10) edge (P11);
\node (P12) at (11,0) {\textbigcircle};;
\node (P13) at (13,0) {\textbigcircle};;
\node (P14) at (15,0) {\textbigcircle};;
\node (P15) at (12,1) {$\otimes$ };
\node (P16) at (14,1) {$\oplus$};
\node (P17) at (13,2) {$\otimes$};
\node (P18) at (12,2) {$\otimes$};
\path[commutative diagrams/.cd, every arrow, every label] 
(P12) edge (P15)
(P13) edge (P15)
(P13) edge (P16)
(P14) edge (P16)
(P15) edge (P17)
(P16) edge (P17)
(P15) edge (P18); 
\end{tikzpicture}
\caption{Example of $G_1 \star_{s_1} G_2$}
\label{fig: glue}
\end{figure}

Now we are ready to state our main theorem. Consider a knapsack problem with parameters $v_i (1 \leq i \leq n), w_i (1\leq i\leq n)$ and $W$. 

\begin{theorem} \label{thm}
There exists a circuit $G = G(v_i, w_i, W)$, and integers $k_m, k_r$ such that 

(1) $G$ has  $O(polylog(v_i, w_i, W) ) \cdot poly(n))$ vertices. 

(2) $k_m, k_r = O( poly(v_i, w_i, W, n))$. 

(3) There exists a set of $n$ vertices $s_1, \ldots, s_n$ in $G$, such that if the length $l_{new}^*(i)$ is the length of $s_i$ in an optimal solution to the relinearize problem on $G$. Then $l_{new}^*(i) (1 \leq i \leq n)$ is an optimal solution to 
\[
\max \sum v_i l_i,  \mbox{ s.t. }  l_i \in \{1, 2\}, \sum w_i l_i \leq W + \sum w_i, 
\]
Hence $l_{new}^*(i) -1 (1 \leq i \leq n)$ is an optimal  solution to the original knapsack problem. 
\end{theorem}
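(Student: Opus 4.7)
The plan is to reduce the knapsack problem to the relinearize problem via an explicit circuit construction whose optimum encodes the knapsack solution at $n$ designated pivot vertices. For each item $i$, I introduce a pivot $s_i$ built as the multiplication of two fresh length-1 inputs, so that $\ell(s_i) = 2$ and the binary relinearization choice $r_i := 2 - l_{new}(s_i) \in \{0, 1\}$ corresponds to the knapsack indicator via $x_i = 1 - r_i$. Using the $\cL(k)$ gadget of Lemma~\ref{lem: one}, I attach to each $s_i$ a value amplifier (with parameter depending on $v_i$) producing a vertex $V_i$, and a weight amplifier $\cL(w_i)$ producing $W_i$. These have $O(\log v_i)$ and $O(\log w_i)$ vertices, and their output lengths scale linearly in $\ell(s_i)$ by factors of order $v_i$ and $w_i$; a binary flip at $s_i$ thus translates into a length change of order $v_i$ or $w_i$ downstream.

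Next I aggregate $M_V := V_1 \boxtimes \cdots \boxtimes V_n$ and $M_W := W_1 \boxtimes \cdots \boxtimes W_n$ via binary multiplication trees, so that $\ell(M_V)$ and $\ell(M_W)$ are affine in $\sum_i v_i r_i$ and $\sum_i w_i r_i$ respectively. To enforce the knapsack constraint $\sum w_i x_i \leq W$, I build a fixed-length anchor subcircuit with output length $T := W + \sum_i w_i$ (an $\cL$-chain with no internal relinearization, of size $O(\log T)$) and form $Z := M_W \boxplus T_{\text{anchor}}$. By the addition rule, $\ell(Z) = \max(\ell(M_W), T)$, which equals $T$ exactly when the knapsack constraint holds. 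Finally, I append a ``$K$-repeat'' that multiplies each of $Z$ and $M_V$ against $K$ fresh length-1 inputs, so that each unit of $\ell(Z)$ or $\ell(M_V)$ contributes an additional mult cost of order $K k_m$.

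The heart of the argument is parameter tuning. With $M := W + \sum w_i + \sum v_i$, I choose $T$, $K$, $k_m$, and $k_r$ polynomially in $M$ (but $K$ only $O(\log M)$) so that three properties hold. First, in the feasible region the direct cost $k_r \sum r_i$ and the value-amplifier savings (accumulating to roughly $K k_m$ per unit length reduction of $M_V$) combine to yield a total cost which, up to additive constants, is a positive constant multiple of $\sum v_i r_i$. Second, any excess $\ell(M_W) > T$ incurs a penalty at rate $\min(k_r, K k_m)$ per unit — realized either by relinearizing $M_W$ directly (cost $k_r$ per unit) or by paying $K k_m$ per unit through the $K$-repeat of $Z$ — and the penalty is chosen large enough to strictly dominate any value-side savings from infeasibility. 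Third, any intermediate value of $l_{new}(s_i)$ outside $\{1, 2\}$ is ruled out by a marginal argument: the cost restricted to $x_{s_i}$ is convex, and the coefficients make the binary endpoints strictly better than any interior point. With these three properties, the optimal relinearization minimizes $\sum v_i r_i$ subject to $\sum w_i r_i \geq \sum w_i - W$, which is equivalent to the knapsack maximization $\max \sum v_i x_i$ subject to $\sum w_i x_i \leq W$ via $x_i = 1 - r_i$.

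Size bound (1) follows from Lemma~\ref{lem: one}: every $\cL(k)$ used has $k \leq M$ and contributes $O(\log M)$ vertices, the aggregation trees, anchor, and $K$-repeat add $O(n + \log T + K) = O(n + \log M)$ further vertices, so the total is $O(\text{poly}(n) \cdot \text{polylog}(v_i, w_i, W))$; bound (2) is immediate from the polynomial choice of $k_m, k_r$. The main obstacle I anticipate is establishing the three tuning properties simultaneously: the weight amplifier's internal cost itself contributes $O(w_i)$ savings per relinearization of $s_i$, and these side-effects must be strictly dominated by the value amplifier's scaling so that the knapsack ordering is preserved. This is what forces the specific polynomial relationships such as $KT > k_r > T \log T + W' \log W'$ and the per-item condition that the net savings from relinearizing $s_i$ in isolation are smaller than $v_i$.
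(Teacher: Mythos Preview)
Your plan shares the paper's core architecture: pivot vertices $s_i$ with length in $\{1,2\}$, $\cL(\cdot)$ gadgets to amplify, a $\boxplus$ against a fixed-length anchor to encode the capacity constraint via a $\max$, and a $K$-fold repeat to make violation prohibitively expensive. The sandwich $k_m KT > k_r > 4k_m(T\log T + W'\log W')$ you invoke is exactly the paper's parameter regime, and your constraint-enforcement argument is essentially the paper's Lemma~\ref{lem: constraint}.

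Where your proposal diverges, and where the genuine gap lies, is in how the objective coefficient $-v_i$ is realized. You propose separate value amplifiers attached to each $s_i$ and assert that in the feasible region the total cost becomes ``a positive constant multiple of $\sum v_i r_i$''. But this cannot hold as stated: relinearizing $s_i$ affects \emph{both} your value branch and your weight branch, and the weight branch (plus the internal multiplication costs inside each $\cL$ gadget and the aggregation tree) contributes a saving that depends on $w_i$, not on $v_i$. So the coefficient of $r_i$ in the feasible-region cost is $k_r - (\text{value-side term in }v_i) - (\text{weight-side term in }w_i)$, and the $w_i$-dependent part cannot be absorbed into a common scalar. If $v_1 = v_2$ but $w_1 \neq w_2$, your objective would distinguish item $1$ from item $2$ while knapsack does not; the reduction is then not exact.

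The paper resolves this differently. It first builds a circuit $G^0$ with \emph{only} weight amplifiers (no value amplifiers at all), computes the resulting coefficient $Kr_i - k_r$ of $l_i$ explicitly, proves $Kr_i - k_r \le -v_i$ for every $i$ under the chosen parameters, and then glues on a second family of correction gadgets $\cL'(\lambda_i)$ at each $s_i$, with $\lambda_i := k_r - Kr_i - v_i \ge 0$, calibrated so that the net coefficient becomes exactly $-v_i$. This per-item post-hoc correction is the idea your sketch is missing. Your final paragraph correctly identifies the symptom (side-effects from the weight amplifier must be ``dominated''), but domination is not enough for an exact reduction: you need exact cancellation, which requires an additional gadget whose parameter can be set freely for each $i$.

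A minor point: your third tuning property (ruling out non-binary $l_{new}(s_i)$ by a convexity argument) is unnecessary. Since $s_i$ has pre-relinearization length $2$ and all lengths are integers at least $1$, one has $l_{new}(s_i)\in\{1,2\}$ automatically.
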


Since our proof is long, we will break it into several parts. First, let $K, T$ be positive integers whose values will be determined later. We  define a circuit 
 \[
 	G^{0}:= \{ ((\cL(w_1) \boxtimes \cL(w_2) ) \cdots \boxtimes \cL(w_n) ) \boxplus \cL(W') ) \curvearrowright \cL(T) \}_S ^{(K)}
 \]
Here $W' = W + \sum_i w_i$, and $S = \{s_1, \ldots, s_n\}$, where $s_i$ is the first non-input vertex in the circuit $\cL(w_i)$. In particular, with no relinearization the length of $s_i$ is equal to 2. Consider the relinearize problem on the circuit $G^0$ and let $l_i$ be the new lengths of $s_i$. Without loss of generality,  we assume that $w_i \leq W$ for all $i$ (if $w_i> W$,  then  any optimal solution of the knapsack problem always have $x_i = 0$, and we can reduce the dimension of the problem by one). 

\begin{lemma} \label{lem: constraint}
Suppose $$k_r >  4k_m(T\log T +W' \log W'),$$  and $k_m =1$. Then for any optimal solution to the relinearize problem on $G^0$, the only vertices that could have nonzero relinearization are the $s_i$. 
\end{lemma}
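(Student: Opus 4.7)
I would prove Lemma~\ref{lem: constraint} by a local exchange argument. Suppose for contradiction that some optimal solution $\mathcal{X}^*$ has $x_v^* \geq 1$ for a vertex $v$ that is neither an input nor one of the $s_i$. The plan is to construct a strictly cheaper feasible solution $\mathcal{X}'$ by setting $x_v := x_v^* - 1$ and propagating the induced length updates to the descendants of $v$; the resulting strict decrease in total cost contradicts the optimality of $\mathcal{X}^*$.

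The key structural observation is that in the $K$-repeat $G_S^{(K)}$, the vertices shared across the $K$ copies are exactly $S$ together with the ancestors of $S$, which in our construction are precisely the $s_i$ and the input vertex of each $\cL(w_i)$. Hence any $v$ outside $\{s_1,\dots,s_n\}\cup\{\text{inputs}\}$ belongs to exactly one copy of $G^0$, and all of its descendants lie in that same copy. The effect of the modification is therefore confined to one copy.

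For the cost accounting, reducing $x_v^*$ by $1$ saves $k_r$ in the relin term. If $v$ is a $\otimes$-vertex then the equality $l^{new}(v) = l^{new}(p_1(v)) + l^{new}(p_2(v)) - x_v$ forces $l^{new}(v)$ up by exactly $1$; if $v$ is an $\oplus$-vertex then the inequality constraints let $l^{new}(v)$ rise by at most $1$. In either case the multiplication-cost term at $v$ itself, $k_m(l^{new}(v) + x_v)$, is invariant because the rise in $l^{new}(v)$ cancels the drop in $x_v$. For each multiplication descendant $d$ of $v$, $l^{new}(d)$ increases by at most the propagation factor from $v$ to $d$, which can double along each squaring step encountered downstream.

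The main obstacle is bounding the cumulative multiplication-cost increase summed over descendant multiplications, since squaring chains amplify length changes geometrically. I would apply the geometric estimate underlying Lemma~\ref{lem: one} separately to each sub-circuit that $v$ influences, namely the downstream portion of $\cL(w_i)$, the combining $\boxtimes$-chain, the $\boxplus$ vertex, $\cL(W')$, and the downstream portion of $\cL(T)$: inside a $\cL(k)$ block the sum of length increases over descendant multiplications induced by a unit length change is controlled by the same geometric sum that yields the $O(k\log k)$ evaluation-cost bound. Summing the contributions across the affected sub-circuits yields a total multiplication-cost increase bounded by $4k_m(T\log T + W'\log W')$ up to lower-order terms, which under the hypothesis and $k_m = 1$ is strictly less than $k_r$. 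Feasibility of $\mathcal{X}'$ is immediate because all $l^{new}$ values only increase, preserving the lower-bound constraints, so the net objective change is strictly negative, contradicting optimality.
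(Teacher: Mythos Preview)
Your proposal is correct and follows essentially the same approach as the paper: a local exchange argument showing that the benefit of relinearizing any non-$s_i$ vertex is bounded, via Lemma~\ref{lem: one}, by the total downstream evaluation cost, which under the hypothesis is strictly less than $k_r$. The paper's version is terser—it directly cites the Lemma~\ref{lem: one} evaluation-cost bound as an upper bound on the benefit without spelling out the propagation mechanics, the feasibility check, or the one-copy confinement under the $K$-repeat that you made explicit.
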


\begin{proof}
By Lemma~\ref{lem: one},  the total cost of evaluating a circuit of type $\cL(T)$ is bounded by $4 k_m T\log T$, hence relinearizing any single vertex in this circuit has benefit bounded by $4k_m T\log T$. The situation is similar for $\cL(W')$. Note that relinearizing verteices in $\cL(W')$ 
could reduce the length of vertices in $\cL(T)$, but the benefit is still bounded above by $4k_m(T\log T + W'\log W')$. For the same reason, the benefit of relinearizing any vertex in any of the $K$ copies of $\cL(w_i)$ is bounded by $4k_m(T\log T + w_i \log w_i)$. Since $w_i \leq W'$, this completes
the proof.
\end{proof}

\begin{lemma} \label{lem: constraint}
Suppose $k_m KT > k_r$.  Then for any optimal solution to the relinearization problem on $G^0$ we must have 
\[
	\sum l_i w_i \leq  W':= W + \sum_{i =1}^n w_i. 
\]
Here again we recall that $l_i \in \{1, 2 \}$ denote the length of $s_i$ in an optimal solution. 
\end{lemma}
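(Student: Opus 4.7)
The plan is to argue by contradiction: suppose some optimal solution to the relinearize problem on $G^0$ has $\sum_i l_i w_i > W'$; I will exhibit a strictly cheaper feasible solution by relinearizing one additional $s_i$, using the hypothesis $k_m KT > k_r$.

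\textbf{Setup.} By the preceding lemma (which already restricts all nonzero relinearization to the $s_i$'s under the standing bound on $k_r$), each $l_i \in \{1,2\}$, so I write $\delta_i := 2 - l_i \in \{0,1\}$. Let $\lambda_i$ (resp.\ $\lambda_{W'}$) denote the output length of $\cL(w_i)$ (resp.\ $\cL(W')$) with no relinearization. Under the modified length rules (multiplication is additive, addition is the max), the length fed as input to each of the $K$ copies of $\cL(T)$ equals
\[
    L(\delta) \;:=\; \max\Bigl(\sum_i \lambda_i - \sum_i \delta_i w_i,\;\; \lambda_{W'}\Bigr).
\]
The construction of $G^0$ is calibrated so that $\sum_i \lambda_i - \lambda_{W'} = \sum_i w_i - W$ (a routine verification once the specific $\cL$-subcircuits are written out). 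Consequently the first argument of the $\max$ strictly exceeds the second precisely when $\sum_i l_i w_i > W'$, and in that case, by integrality, the excess is at least $1$.

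\textbf{Local modification.} Since $\sum_i \delta_i w_i < \sum_i w_i$, there exists some $i$ with $\delta_i = 0$. I flip $\delta_i$ from $0$ to $1$: the relinearization cost increases by exactly $k_r$, the multiplication-chain output shortens by $w_i \geq 1$, and every other vertex length in the circuit can only shorten, so all other cost contributions are non-positive. By the excess-at-least-one observation, $L(\delta)$ itself drops by some $\Delta \geq 1$ (explicitly $\Delta = \min(w_i,\, \sum_j \lambda_j - \sum_j \delta_j w_j - \lambda_{W'})$, a positive integer). It therefore suffices to lower-bound the cost savings inside the $K$ copies of $\cL(T)$.

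\textbf{Savings bound and contradiction.} By the explicit squaring-chain construction from Lemma~\ref{lem: one}, the final vertex of each $\cL(T)$ copy has length $\Theta(LT)$ when fed input of length $L$, contributing a squaring cost of $\Theta(k_m LT)$ by itself. Reducing $L$ by $\Delta \geq 1$ therefore saves at least a constant multiple of $k_m T$ in each copy, and hence at least $k_m KT$ across the $K$ independent copies. Under the hypothesis $k_m KT > k_r$, the net change in the objective is at most $k_r - k_m KT < 0$, contradicting optimality of the original solution and completing the proof.

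\textbf{Main obstacle.} The delicate point is ensuring $\Delta \geq 1$: if the cap $\lambda_{W'}$ absorbed essentially the entire chain-length drop, the savings could in principle be negligible. This is precisely where integrality of the excess above the threshold is used, and where the calibration of the $\cL$-subcircuits becomes essential. A secondary care point is producing clean constants in the $\cL(T)$ savings bound that fit the hypothesis $k_m KT > k_r$; I expect this to be handled cleanly by restricting attention to the single last squaring vertex in each copy of $\cL(T)$, whose cost drop is a clean multiple of $T \Delta$.
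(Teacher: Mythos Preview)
Your proposal is correct and follows essentially the same contradiction argument as the paper: assume $\sum l_i w_i > W'$, pick an $i$ with $l_i=2$, relinearize $s_i$, and show the resulting savings across the $K$ copies of $\cL(T)$ exceed the extra cost $k_r$. You are in fact more careful than the paper about the $\boxplus$-threshold (your $\Delta\ge 1$ analysis), which the paper's proof simply asserts; just be aware that your specific calibration identity $\sum_i \lambda_i - \lambda_{W'} = \sum_i w_i - W$ depends on the exact $\cL$-construction and should be checked rather than declared routine.
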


\begin{proof}
Suppose the claim is false. Then there exists $i$ such that $l_i = 2$. We relinearize the vertex $s_i$, which reduces the length of the final output in each  copy of $\cL(w_i)$ by $w_i$, and the length of the output vertex of
$$
(\cL(w_1) \boxtimes \cL(w_2)) \cdots \boxtimes \cL(w_n))$$

is reduced by $w_i$. Since $\sum l_i w_i > W'$, the length of the input vertex in each $\cL(T)$ is reduced by at least one, and the cost reduction from each $\cL(T)$ is at least $k_m T$. Hence we the benefit 
we collect from relinearizing $s_i$ is at least $k_m KT$, whereas the cost is $k_r$. Since we assumed $k_m KT > k_r$, we know relinearizing the vertex $s_i$ reduces  the total cost. This is a contradiction, since we started with an optimal solution. 
\end{proof}

Now we can starting proving Theorem~\ref{thm}.

\begin{proof} (of Theorem~\ref{thm}) 
Let 
$M = W + \sum_i w_i + \sum_i v_i$. We take $T = \lceil 5M \log M \rceil$, $k_r = 25 \lceil M \log M \log(M \log M) \rceil$,  $K = 6 \lceil \log(M \log M) \rceil$ and $k_m = 1$. It is easy to see that $K,T, k_r$ are of size polynomial in $W, w_i, v_i$. One can verify that 
$k_m KT > k_r$ and $k_r > 4k_m( T\log T + W' \log W')$. Thus, by Lemma~\ref{lem: constraint}, we have $\sum l_i w_i \leq W'$ if $l_i$ are the new length of $s_i$ in any optimal solution to the relinearization problem on $G^0$. This means we have the correct constraint. However, the costs are wrong: the total cost of evaluating the circuit $G^0$ is given by 
\[
	K(\sum r_i l_i) +\sum_i k_r (2 - l_i) + C, 
\]
where as we proved in Lemma~\ref{lem: one}, $r_i  \leq 4 w_i \log(w_i)$. Here $C$ is the cost of evaluating all the $\cL(T)$ circuits plus all the $\cL(W')$ circuits. The fact that $C$ is a constant follows from Lemma~\ref{lem: constraint}.

Note that the coefficient before $l_i$ is equal to $Kr_i - k_r$, and we want to modify this coefficient to $-v_i$. First, note that 

\begin{align*}
Kr_i - k_r &\leq K 4 w_i \log w_i - k_r  \\
&\leq 4 K \lceil M \log M \rceil - k_r  \\
&\leq (24 - 25) \lceil M \log M \log(M\log M) \rceil \\
&\leq -M  \\
&\leq -v_i, \forall i. 
\end{align*}
Let $\lambda_i = k_r - Kr_i - v_i \in \bZ_{\geq 0}$. We claim that there exists a circuit $\cL'(\lambda_i)$ of  such that relinearizing its first non-input vertex  reduces the total multiplication cost by $\lambda_i$. We omit the details of construction of $\cL'$ since it is similar to that of $\cL$. In particular, 
the $\cL'(\lambda_i)$ can be constructed with at most $2 \log (\lambda_i)$ vertices. We then let 
\[
	G^1 = G^0 \star_{s_1} \cL'(\lambda_1), \ldots, 	G^i = G^{i-1} \star_{s_i} \cL'(\lambda_i), \ldots, G^n = G^{n-1} \star_{s_n}  \cL'(\lambda_n)
\]
and set $G = G^n$. Since $\lambda_i < k_r$, one can see that in any optimal solution of the relinearize problem on $G$, the vertices in $\cL'(\lambda_i)$ have zero relinearization. Thus, the relinearize problem on $G$ is equivalent to 
\[
\min \sum_{i=1}^{n} -v_i l_i + C',  \mbox{ s.t. }  l_i \in \{1, 2\} \mbox{ and } \sum_{i=1}^{n} w_i l_i \leq W + \sum w_i, 
\]
which is equivalent to 
\[
\max \sum_{i=1}^{n} v_i l_i,  \mbox{ s.t. }  l_i \in \{1, 2\} \mbox{ and } \sum_{i=1}^{n}  w_i l_i \leq W + \sum w_i.
\]
This proves part (3) of Theorem~\ref{thm}. Part (1) is clear since the number of vertices in $G$ is bounded by $2K ( \log(T) + \log(W') +  \sum_{i=1}^n \log(w_i) ) + 2\sum_{i=1}^n \log(\lambda_i)$. Hence it is logarithm in the parameters $v_i, w_i, W$ and linear in the number of variables $n$. For (2), note that we set $k_m =1$, so it suffices to prove it for $k_r$. By construction,  $k_r$ is also bounded by a polynomial in $v_i, w_i, W$. This completes the proof.
\end{proof}

\begin{corollary}
The relinearize problem is NP-hard. 
\end{corollary}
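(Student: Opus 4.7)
The plan is to derive the corollary as an almost immediate consequence of Theorem~\ref{thm}. The theorem constructs, from any knapsack instance $(v_1,\ldots,v_n,w_1,\ldots,w_n,W)$, a relinearize instance $(G,k_m,k_r)$ together with a distinguished set of vertices $s_1,\ldots,s_n$ whose final lengths in any optimal relinearize solution encode an optimal knapsack solution. So the skeleton of the argument is: knapsack is NP-hard, Theorem~\ref{thm} supplies a polynomial-time reduction from knapsack to relinearize, therefore relinearize is NP-hard.

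The key thing I have to do is argue that this is a \emph{polynomial-time} reduction, not merely a polynomial-size one. Part (1) of Theorem~\ref{thm} gives $|V(G)| = O(\mathrm{polylog}(v_i,w_i,W)\cdot \mathrm{poly}(n))$, which is polynomial in the binary input length of the knapsack instance. Part (2) gives $k_m,k_r = O(\mathrm{poly}(v_i,w_i,W,n))$, so when written in binary they also have polynomial size. Since the explicit construction $G^0 = \{((\cL(w_1)\boxtimes\cdots\boxtimes\cL(w_n))\boxplus \cL(W'))\curvearrowright \cL(T)\}_S^{(K)}$ followed by gluings along the $\cL'(\lambda_i)$ is built from the elementary circuit operations of Definition~2, and each of the building blocks $\cL(k)$ and $\cL'(k)$ can be constructed in time $O(\log k)$ by the proof of Lemma~\ref{lem: one}, the entire reduction runs in time polynomial in the knapsack input size.

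Once these two ingredients are in hand, the conclusion is routine: given an optimal solution $l_{new}^*$ to the relinearize instance $(G,k_m,k_r)$, the values $x_i := l_{new}^*(s_i) - 1 \in \{0,1\}$ form an optimal solution to the original knapsack problem by part (3) of Theorem~\ref{thm}. Hence any polynomial-time algorithm for relinearize would yield a polynomial-time algorithm for knapsack. Combined with the NP-hardness of knapsack, this establishes the corollary.

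I do not anticipate any real obstacle here; the substantive work was done in Theorem~\ref{thm}, and the corollary is essentially a bookkeeping step. The only point worth double-checking when writing the proof is that the parameters $T$, $K$, $k_r$ chosen in the theorem are indeed polynomially bounded in the binary input size of the knapsack instance (they are, since each is polynomial in $M = W+\sum w_i + \sum v_i$, up to logarithmic factors).
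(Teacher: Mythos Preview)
Your proposal is correct and matches the paper's approach: the paper states the corollary with no written proof at all, treating it as an immediate consequence of Theorem~\ref{thm}, and what you have written is exactly the standard bookkeeping (polynomial-size circuit, polynomially bounded parameters, hence polynomial-time reduction from knapsack) that the paper leaves implicit.
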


\section{An Simple Case}
Assume we are in the situation where each non-input vertex in the circuit has two inputs and at most one output. 
In this case, we have a polynomial time algorithm for the relinearize problem. For a vertex $i$, define $M(i,\ell)$ to be the minimal cost to compute the circuit up to vertex $i$, so that the new length of $i$ is $\ell$.


Recall that $p_1(i)$ and $p_2(i)$ denote the parents of $i$. If $i$ is a multiplicative vertex, we have 
$$M(i,\ell)=\min_{\ell_1,\ell_2}\{M(p_1(i),\ell_1)+M(p_2(i),\ell_2)+k_r(\ell_1+\ell_2-\ell)+k_m(\ell_1+\ell_2)\}.$$

If $i$ is an addition vertex, we have 
$$M(i,\ell)=\min_{\ell_1,\ell_2}\{M(p_1(i),\ell_1)+M(p_2(i),\ell_2)+k_r(\max\{\ell_1,\ell_2\}-\ell)\}.$$

Here it is important that the vertices all only have a single output, since otherwise $p_1(i)$ and $p_2(i)$ might have a common ancestor, in which case relinearizing this ancestor might benefit both of them.

\begin{claim} Suppose $N  = |V| \ge 2$. Then in the above formulae, it suffices to take the minimum over range $2\le \ell_1, \ell_2 \le N$.
\end{claim}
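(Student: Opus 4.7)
The plan is to show that in any feasible solution to the relinearize problem on this circuit, and in particular in the optimum encoded by the DP, the length of every vertex is at most $N$. This immediately justifies restricting the inner minimization to $\ell_1,\ell_2 \in [2,N]$, since the lower bound $\ell_j \ge 2$ is the standing feasibility constraint $l^{new}(p_j(i))\ge 2$.

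First I would introduce the \emph{no-relinearization length} $\tilde l(v)$, defined recursively by $\tilde l(v)=2$ for inputs, $\tilde l(v)=\tilde l(p_1(v))+\tilde l(p_2(v))-1$ for multiplication, and $\tilde l(v)=\max(\tilde l(p_1(v)),\tilde l(p_2(v)))$ for addition. Because $x_v\ge 0$, an easy induction on the DAG gives $l^{new}(v)\le \tilde l(v)$ for multiplication vertices in any feasible assignment. For addition vertices the constraint is only an inequality, so a priori $l^{new}(v)$ could be inflated; however, inflating $l^{new}(v)$ never decreases the objective (it can only increase downstream multiplication cost and downstream relinearization cost), so without loss of generality we may assume $l^{new}(v)=\max(l^{new}(p_1(v)),l^{new}(p_2(v)))-x_v$, which by induction is again at most $\tilde l(v)$. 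Equivalently, one observes that in the DP recurrences the terms $\max\{\ell_1,\ell_2\}-\ell$ and $\ell_1+\ell_2-\ell$ force $\ell\le \tilde l(i)$, so $M(i,\ell)=\infty$ for $\ell>\tilde l(i)$.

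Next I would prove $\tilde l(v)\le N$ for every vertex $v$, using crucially that outdegree is at most one. Under this hypothesis the set $A(v)$ of ancestors of $v$ (together with $v$) forms a tree rooted at $v$, and all these ancestors are vertices of $G$, so $|A(v)|\le N$. Let $k$ be the number of input vertices in $A(v)$ and $m$ the number of non-input vertices; then $|A(v)|=k+m$, while counting edges in $A(v)$ gives $k+m-1$ edges on one hand and $2m$ on the other (since non-inputs have indegree $2$), so $m=k-1$ and hence $2k-1\le N$. A straightforward induction shows $\tilde l(v)\le k+1$ (multiplication adds lengths and addition takes max, both bounded by the number of leaves plus one). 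Therefore $\tilde l(v)\le (N+1)/2+1\le N$ for $N\ge 3$, and the case $N=2$ is immediate since then $\tilde l(v)\le 2=N$.

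Combining these two steps, in the recurrences for $M(i,\ell)$ the candidates with $\ell_j>N$ contribute $M(p_j(i),\ell_j)=\infty$, so the minimum is unchanged by restricting to $\ell_1,\ell_2\in[2,N]$. The main obstacle is the tree-counting argument: without the outdegree-one hypothesis, a vertex could have many descendants that reuse its value, and its length would no longer be controlled by the size of its ancestor subtree alone; this is exactly where the assumption of the section enters.
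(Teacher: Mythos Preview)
Your proof is correct, and the underlying idea---bounding the length of each vertex by a quantity determined by its ancestor subtree, using the outdegree-one hypothesis to make the subtrees disjoint---matches the paper's. The paper's argument is shorter: it shows directly by induction that the length of any non-input vertex $v$ is at most the number of its ancestors (with the convention that an input vertex has length $2$, i.e., one more than its single ``ancestor''), using $l(p_1)+l(p_2)-1 \le n_1+n_2+1 = n$. You instead bound $\tilde l(v)$ by (number of input-leaves) $+1$ and then invoke the edge-counting identity $m=k-1$ in a full binary tree to get $\tilde l(v)\le (N+1)/2+1$, a sharper intermediate bound than the paper's $\tilde l(v)\le N$, at the cost of a longer argument. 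You also make explicit something the paper leaves implicit: for addition vertices the feasibility constraint is only a lower bound on $l^{new}$, so one must argue (as you do, via monotonicity of the objective or equivalently via $M(i,\ell)=\infty$ for $\ell>\tilde l(i)$) that inflating lengths is never beneficial. That extra care is a genuine improvement in rigor over the paper's somewhat informal proof.
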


\begin{proof}
For the input vertices, the lengths is at most $2$. For any non-input vertex $v$, we prove inductively that its length cannot exceed its number of  ancestors. The length is at most $l(p_1(v)) + l(p_2(v))-1$, and by inductive hypothesis, both $l(p_1(v))$ and $l(p_1(v))$ are at most their number of ancestors (or plus one if it happens to be an input vertex). That is, $l(p_1(v)) + l(p_2(v))-1 \le n_1+n_2+1=n$. Here $n_1, n_2, n$ denote the number of ancestors for $p_1(v), p_2(v), v$, respectively. 
\end{proof}

Now our algorithm proceeds as follows. We traverse the $N$ vertices. At each vertex, we compute $M(i,l)$ for $O(N)$ values of $l$, and each computation requires $O(N^2)$ operations. Thus the total running time is $O(N^4)$. Finally, the optimal cost is given by 
$\min_{2 \leq l \leq N} M(v, l)$, where $v$ is the output node of the graph $G$. 

\section{Conclusion and Future Work}

Fully homomorphic encryption evaluates boolean circuits, and relinearization is a standard technique to reduce the ciphertext sizes after evaluation. In this paper, we consider the goal of optimizing where and how much to perform the relinearization operation in any given circuit, in order to minimize the total computational cost. We formalized it as a discrete optimization problem, and proved that the problem is NP-hard. In the special case where every node has at most one ouptut node, we give a polynomial time algorithm. 

For future directions, it is of interest to design fast approximate algorithms for the relinearization problem. Also, one can aim at optimizing specific circuits that appear in the literature for applications of FHE. Examples include  components of the AES encryption/decryption circuit and machine learning models such as logistic regression or neural network. 

\nocite{*}
\bibliographystyle{alpha}
\bibliography{relinearization}

\end{document}